\newcommand{\ACzlog}{\texorpdfstring{\ensuremath{\mathrm{AC}^{0}\!{+}\!\log}}{AC0+log}}
\newcommand{\ADeux}{\texorpdfstring{\ensuremath{\mathrm{A2}^{d}}}{A2d}}
\newcommand{\KC}{\texorpdfstring{\ensuremath{K_{C}}}{KC}}
\newtheorem{theorem}{Theorem}
\newtheorem{lemma}{Lemma}
\newtheorem{corollary}{Corollary}
\title{IECZ: An MDL-Style Cost Functional \KC{}, Distribution-Preserving Reductions (\ADeux{}), and an \ACzlog{} Lower Bound for 3SAT via Balanced 3XOR}
\author{Marko Lela\\
\texttt{marko\_lela@web.de}\\
\href{https://orcid.org/0009-0008-0768-5184}{ORCID: 0009-0008-0768-5184}}
\date{September 17, 2025}
\begin{document}
\maketitle

\begin{abstract}
We introduce a model-agnostic MDL-style cost functional $K_C$ for resource-bounded classifiers and prove a Total-Variation stable
reduction lemma (A2$^d$) for distribution-preserving many-to-one reductions. On a balanced distribution of random 3XOR instances
(with co-rank $t'=\Theta(n)$) we obtain a size-aware lower bound against P-uniform $\mathrm{AC}^0{+}\log$ models:
\[
\Pr[M=\chi]\ \le\ \tfrac12\ +\ s(N)\exp\!\big(-\alpha_d\,m^{\,c/d}\big)\qquad(c\in(0,1);\ \text{e.g., }c=\tfrac13\Rightarrow \beta_d=\tfrac{1}{3d}).
\]
A deterministic, injective $3$XOR$\to 3$SAT translation (four 3-clauses per XOR, no auxiliaries) is $\delta{=}0$ measure-preserving
on its image window; by A2$^d$ the bound transfers to 3SAT. This yields, to our knowledge, the first explicit $K_C$-reading of such
size-aware bounds under a $\delta{=}0$ measure-preserving reduction in small-depth circuit lower bounds. We provide artifacts
(generator $\to$ DIMACS $\to$ verification) with match-rate $1.0$.

\end{abstract}

\paragraph{Keywords.}
\noindent MDL cost functional; Total Variation; distribution-preserving reduction;\\
\ACzlog{}; Switching Lemma; 3XOR; 3SAT; parity barrier.
\smallskip

\section{Introduction}
We introduce a model-agnostic MDL-style cost functional \KC{} for resource-bounded classifiers,
prove a Total-Variation (TV) stable reduction lemma (\ADeux{}) for distribution-preserving many-to-one reductions,
and calibrate the framework with 2SAT in $C_{\mathrm{alg}}$ where $K^{2\mathrm{SAT}}_{C_{\mathrm{alg}}}(n,\alpha,0)=O(1)$.
Our hardness track uses a balanced distribution over random 3XOR instances and transfers it to 3SAT via a deterministic,
$\delta{=}0$ measure-preserving translation. We make the size dependence explicit and give a clean reading in terms of \KC{}.
Unlike prior TV-stable frameworks in mechanism design and probabilistic inference \cite{Liu2023_robustness_mechanism_design,Bhattacharyya2024_total_variation_inference}, our use of a $\delta{=}0$ measure-preserving mapping transfers average-case hardness from balanced 3XOR to 3SAT and yields a first explicit $K_C$-reading of the resulting size-aware bounds.
This combination of an MDL-style $K_C$ with a $\delta{=}0$ measure-preserving reduction gives, to our knowledge, the first explicit $K_C$-reading for size-aware lower bounds in this setting.

\paragraph{Conventions.}
We write $n$ for the number of variables, $m$ for the number of 3XOR clauses, and $N$ for the encoding length.
Under the canonical translation to 3SAT, the variable set is unchanged and each XOR clause becomes four 3-clauses,
so $m'=4m$ and the 3SAT encoding length $N'$ satisfies $N'=\Theta(N)$.

\section{IECZ Core: \KC{}, Measure, and \ADeux{} (brief)}
We work on slices $I_{n,\alpha}$ with canonical, prefix-free encodings $\mathrm{enc}_{n,\alpha}$.
The slice-based cost is
\[
K_C(n,\alpha,\varepsilon)\;=\;\min\{\,L(\mathrm{desc}(M))+|\theta| \;:\; M\in C,\ \Pr[M \neq \chi]\le \varepsilon\,\}.
\]

\paragraph{General form on distributions.}
For any distribution $\mu$ on the instance space, define
\[
K_C(\mu,\varepsilon)\ :=\ \min\{\,L(\mathrm{desc}(M))+|\theta|\ :\ M\in C,\ \Pr_{X\sim\mu}[M(X)\neq \chi(X)]\le \varepsilon\,\}.
\]
Uniform slices are special cases $K_C(n,\alpha,\varepsilon)=K_C(U(I_{n,\alpha}),\varepsilon)$.
Notation note. We write $K_C^L$ when the target language $L$ is fixed (e.g., $K_C^{3\mathrm{SAT}}$), and $K_C(\mu,\varepsilon)$ for distributional readings; both agree on uniform slices via $K_C(n,\alpha,\varepsilon)=K_C(U(I_{n,\alpha}),\varepsilon)$.

\paragraph{Advantage notation.}
For a Boolean function $f$ and a classifier $M$ evaluated under a distribution $\mu$, we write
\[
\langle M,f\rangle_\mu \ :=\ \Big|\Pr_{X\sim\mu}[\,M(X)=f(X)\,]-\tfrac12\Big|,
\]
and drop the subscript when $\mu$ is clear.

\paragraph{Advice model and uniformity.}
Advice $\theta_N$ depends on the input length $N$ only (not on the instance), with $|\theta_N|=O(\log N)$.
Circuit families are P-uniform. This prevents encoding instance-specific masks in the $O(\log N)$ side information.

\paragraph{Calibration: $K^{2\mathrm{SAT}}_{C_{\mathrm{alg}}}(n,\alpha,0)=O(1)$.}
The implication-graph algorithm runs in $O(n{+}m)$, tests SCCs, and decides satisfiability with zero error.
As a P-uniform fixed procedure it contributes constant description length and uses empty advice; hence the cost functional attains $O(1)$ at $\varepsilon=0$.
This fixes the MDL scale for languages in P.

\paragraph{TV contraction (1-Lipschitz).}
\begin{lemma}[Deterministic push-forward contracts TV]
For any deterministic transducer $T$ and distributions $\mu,\nu$ on the domain,
$\|\;T_\#\mu - T_\#\nu\;\|_{\mathrm{TV}} \le \|\;\mu-\nu\;\|_{\mathrm{TV}}$.
\end{lemma}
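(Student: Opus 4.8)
The plan is to use the variational (supremum-over-events) characterization of total variation distance, which reduces the claim to a one-line monotonicity observation. Recall that for distributions $P,Q$ on a common space one has $\|P-Q\|_{\mathrm{TV}} = \sup_A |P(A)-Q(A)|$, the supremum ranging over measurable events $A$. First I would write out the pushforward: by definition $T_\#\mu$ assigns mass $(T_\#\mu)(B) = \mu(T^{-1}(B))$ to each measurable event $B$ in the codomain, and likewise for $\nu$. Applying the variational formula to the pushed-forward measures gives
\[
\|T_\#\mu - T_\#\nu\|_{\mathrm{TV}} \;=\; \sup_B \big|\,\mu(T^{-1}(B)) - \nu(T^{-1}(B))\,\big|,
\]
where $B$ ranges over measurable events in the range of $T$.

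The key step is then the set-theoretic observation that each preimage $T^{-1}(B)$ is itself a measurable event in the domain (here $T$ is a deterministic transducer, hence measurable; in the finite-string setting of interest every set is measurable, so this is automatic). Thus $\{\,T^{-1}(B) : B \text{ measurable}\,\}$ is a subfamily of the collection of all measurable events $A$ in the domain, and a supremum taken over a smaller family cannot increase:
\[
\sup_B \big|\,\mu(T^{-1}(B)) - \nu(T^{-1}(B))\,\big| \;\le\; \sup_A \big|\,\mu(A) - \nu(A)\,\big| \;=\; \|\mu - \nu\|_{\mathrm{TV}}.
\]
Chaining the two displays yields the $1$-Lipschitz bound.

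I expect no genuine obstacle here: this is the standard data-processing inequality for total variation specialized to a deterministic channel, and the only point requiring care is choosing the event-based definition of $\mathrm{TV}$ rather than the $\tfrac12\,\ell_1$ form, since the latter would need an extra lumping and triangle-inequality argument to collapse the fibers $T^{-1}(\{y\})$ before comparing. As a cross-check I would also record the coupling proof: an optimal coupling $(X,Y)$ of $\mu,\nu$ pushes forward to a coupling $(T(X),T(Y))$ of $T_\#\mu, T_\#\nu$, and since $T(X)\neq T(Y)$ forces $X\neq Y$ we get $\Pr[T(X)\neq T(Y)] \le \Pr[X\neq Y]$, recovering the inequality from the coupling characterization $\|P-Q\|_{\mathrm{TV}} = \min_{\text{couplings}} \Pr[X\neq Y]$. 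Determinism is exactly what makes the fibers partition the domain cleanly; for a randomized transducer the same bound still holds, but then it follows from the general data-processing inequality for $f$-divergences (of which $\mathrm{TV}$ is one) rather than from this direct argument.
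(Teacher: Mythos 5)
Your proof is correct and follows the same route as the paper's: both use the variational characterization $\|\mu-\nu\|_{\mathrm{TV}}=\sup_A|\mu(A)-\nu(A)|$ and observe that the preimages $T^{-1}(B)$ form a subfamily (a sub-$\sigma$-algebra) of the measurable sets in the domain, so the supremum can only decrease. Your additional coupling-based cross-check is a nice confirmation but not part of the paper's argument.
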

\emph{Proof.} By the dual characterization $\|\cdot\|_{\mathrm{TV}}=\sup_{A}|\mu(A)-\nu(A)|$
over measurable sets $A$, note that $T^{-1}(B)$ ranges over a sub-$\sigma$-algebra, hence the supremum can only decrease.

\paragraph{Reduction robustness (\ADeux).}
For polynomial-time many-to-one reductions $R$ with size control, polynomial fiber bound, TV preservation $\delta(n)$, and correctness,
\[
K^{L_1}_C(n,\alpha,\varepsilon{+}\delta(n))\ \le\ K^{L_2}_C(n',\alpha',\varepsilon)+O(\log n).
\]

\begin{proof}[Proof of \ADeux{}]
Let $R$ be the reduction $I^{L_1}_{n,\alpha}\to I^{L_2}_{n',\alpha'}$ and let $(M_2,\theta')$ attain (or $\le$) $K^{L_2}_C(n',\alpha',\varepsilon)$.
Define $M_1$ on $I^{L_1}_{n,\alpha}$ by $M_1(x;\theta):=M_2(R(x);\theta')$ with a fixed pre-/post-processor for encoding/decoding canonical targets.
\emph{Correctness/monotonicity:} Since $x\in L_1\iff R(x)\in L_2$, the pointwise error of $M_1$ equals that of $M_2$ on $R(x)$.
\emph{TV step:} By the deterministic TV contraction, pushing forward the source slice under $R$ perturbs the target distribution by at most $\delta(n)$, hence
$\Pr[M_1\neq\chi_{L_1}]\le \Pr[M_2\neq\chi_{L_2}]+\delta(n)\le \varepsilon+\delta(n)$.
\emph{Overhead:} By the polynomial fiber bound, a fiber ID of length $\lceil\log |F_\psi|\rceil=O(\log n)$ suffices to disambiguate preimages when needed;
P-uniform pre-/post-processing contributes $O(1)$ to the description.
\emph{Conclusion:} $L(\mathrm{desc}(M_1))+|\theta|\le L(\mathrm{desc}(M_2))+|\theta'|+O(\log n)$ and $M_1\in C$ by closure. This yields the display.
\end{proof}

\paragraph{Overhead note.}
The additive $O(\log n)$ accounts for polynomial pre-/post-processing and a fiber disambiguation header.
Under our canonical target encoding this header collapses to $O(1)$; we keep $O(\log n)$ as a uniform statement.

\section{Balanced 3XOR (structure, balance) and size-aware \ACzlog{} hardness}
A 3XOR instance $\Phi(A,b)$ has $m=(1+\gamma)n$ clauses with $\gamma\in(0,1]$; each row of $A$ has Hamming weight three.
Let $H(A)$ be a row-basis of $\ker(A^\top)$ over $\mathbb{F}_2$.
We sample $u$ with $\Pr[u{=}0]=\tfrac12$ and choose $b$ uniformly from the coset $\{b : H(A)b=u\}$.
Then $\chi(\Phi){=}1 \Leftrightarrow H(A)b{=}0$, i.e., an \emph{AND of $t'$ parities} of the clause-bit vector, where $t'=m-\mathrm{rank}(A)$.

\paragraph{Label balance (theory vs.\ artifacts).}
By construction the balanced distribution satisfies $\Pr[\chi(\Phi){=}1]=\tfrac12$ in expectation.
In our artifacts we additionally enforce exact $50{:}50$ balance \emph{per $n$} via a fixed label sequence and random permutation; this does not change the conditional coset-uniformity and does not affect the proofs. The label permutation is sampled independently of $(A,u)$ (and of all later restrictions), so no hidden conditioning is introduced.

\paragraph{Sampling of $u$ (distributional balance).}
Let $\mathrm{Im}(H):=\{H(A)b:\ b\in\{0,1\}^m\}\subseteq\{0,1\}^{t'}$.
We sample $u$ by setting $\Pr[u=0^{t'}]=\tfrac12$ and, conditioned on $u\neq 0^{t'}$, choosing $u$ uniformly from $\mathrm{Im}(H)\setminus\{0^{t'}\}$.
Then $b$ is drawn uniformly from the affine coset $\{b:\ H(A)b=u\}$.
Hence $\chi(\Phi)=1\iff u=0^{t'}$, which yields $\Pr[\chi(\Phi)=1]=\tfrac12$ in expectation.

\paragraph{Co-rank scaling at $m=(1+\gamma)n$.}
For random 3-uniform clause–variable incidence matrices (row weight $3$), w.h.p. $\mathrm{rank}(A)=n-O(1)$ in the regime $m=(1+\gamma)n$;
equivalently, the co-rank $t'=m-\mathrm{rank}(A)$ is \emph{linear} in $n$ (2-core / hypercycle structure).
See \cite{CooperFrieze2022_rout_rank,CooperFriezePegden2019_rank_binary_matrix,PittelSorkin2016_kxorsat_threshold}.
This matches our artifacts ($t'=\Theta(n)$) and yields an AND of $\Theta(n)$ parities via $H(A)b=0^{t'}$.

\begin{lemma}[Random projection to a surviving parity]\label{lem:proj-one-parity}
There exists a distribution $\mathcal{R}$ over random restrictions/projections on the clause bits $b$
(supported on coordinate fixings that preserve affine-uniformity of $\{b:H(A)b=u\}$ conditioned on $A$)
such that with constant probability $c>0$ exactly one parity $\mathrm{par}_j$ remains \emph{alive} (non-constant on the restricted domain),
while all other parities are \emph{neutralized} (become constants).
Consequently, for any classifier $M$,
\[
\Big\langle M,\ \bigwedge_{i=1}^{t'}\mathrm{par}_i \Big\rangle
\;\le\;
\mathbb{E}_{\rho\sim\mathcal{R}}\ \big\langle M\!\upharpoonright_{\rho},\ \mathrm{par}_j \big\rangle.
\]
Here $\langle\cdot,\cdot\rangle$ denotes advantage over $1/2$. The push-forward under $\rho$ preserves TV exactly (TV$=0$).
\end{lemma}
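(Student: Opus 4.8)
The plan is to turn the $t'$-fold conjunction of parities into a single surviving parity by a measure-preserving random restriction on the clause bits, and then to transfer the advantage by a law-of-total-probability decomposition followed by the triangle inequality. First I would normalize the basis: since the target $\bigwedge_{i}\mathrm{par}_i=[H(A)b=0]=:\chi$ depends only on $\mathrm{rowspace}(H(A))$, I may assume $H(A)$ is in reduced row-echelon form. Then each $\mathrm{par}_i$ acquires a private pivot coordinate $p_i$ with $h_{i,p_i}=1$ and $h_{k,p_i}=0$ for $k\neq i$. This private-pivot structure is exactly what makes single-parity isolation possible: freeing $b_{p_j}$ keeps $\mathrm{par}_j$ non-constant while every $\mathrm{par}_i$ with $i\neq j$ is independent of that coordinate and hence becomes constant.

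Next I would define $\mathcal{R}$ to free each pivot coordinate independently with probability $p=\Theta(1/t')$ and to fix the remaining coordinates to values drawn from the coset-conditional law. Since the number of freed pivots is $\mathrm{Binomial}(t',p)$, taking $p\asymp 1/t'$ makes exactly one pivot free with constant probability $c\approx e^{-1}$; on that event exactly one parity $\mathrm{par}_j$ is alive and all others are constant, which is part~(a). Because the free coordinate stays uniform on the induced affine line inside each coset while the fixed part is sampled from the matching coset-conditional, the restricted law coincides exactly with the intended target law; the push-forward distorts no mass, so $\TV=0$ (part~(c), and the $\delta{=}0$ input that \ADeux{} later consumes).

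For the inequality I would use that $\mathcal{R}$ decomposes the balanced law $\mu$, namely $\Pr_\mu[\,\cdot\,]=\mathbb{E}_{\rho}\Pr_{\mu|\rho}[\,\cdot\,]$, to write
\[
\Pr_\mu[M=\chi]-\tfrac12=\mathbb{E}_{\rho}\Big(\Pr_{\mu|\rho}[M\!\upharpoonright_\rho=\chi\!\upharpoonright_\rho]-\tfrac12\Big).
\]
Absolute values and the triangle inequality (Jensen for $|\cdot|$) then bound $\langle M,\chi\rangle$ by $\mathbb{E}_\rho\langle M\!\upharpoonright_\rho,\chi\!\upharpoonright_\rho\rangle$. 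On an isolating restriction the neutralized parities sit at their satisfying value, so $\chi\!\upharpoonright_\rho=[\mathrm{par}_j=0]=\overline{\mathrm{par}_j}$; since the advantage is invariant under negating either argument, $\langle M\!\upharpoonright_\rho,\overline{\mathrm{par}_j}\rangle=\langle M\!\upharpoonright_\rho,\mathrm{par}_j\rangle$, which yields the displayed bound of the lemma.

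The step I expect to be the main obstacle is reconciling the three requirements at once. The decomposition identity wants $\mathcal{R}$ to tile all of $\mu$, whereas the collapse $\chi\!\upharpoonright_\rho=\overline{\mathrm{par}_j}$ needs every neutralized parity to land on $0$ rather than on a value forcing $\chi\!\upharpoonright_\rho\equiv 0$; a constant target would contribute an uncontrolled constant-function advantage to the right-hand side, which is not bounded by a single-parity advantage. The balanced sampling of $u$ (which over-weights the all-zero coset) pulls against this, so the delicate point is to choose the coset-conditional fixing and the survivor rule so that the isolating event simultaneously has constant probability, forces the non-survivor parities to $0$, and preserves exact coset-affine-uniformity — precisely where the linear co-rank / 2-core structure ($t'=\Theta(n)$) must be invoked. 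Once the isolating restrictions are shown to carry the decomposition, the pivot algebra and the triangle inequality are routine.
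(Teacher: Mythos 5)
Your isolation mechanism is genuinely different from the paper's, and it has a gap that the paper's device is specifically designed to avoid. You free a single RREF pivot and fix \emph{every} other coordinate; this does make all non-survivor parities constant, but on the isolating event the surviving parity $\mathrm{par}_j$ is then left with exactly one free bit. On such a restriction the displayed inequality is vacuous: the restricted classifier can read the one free bit, a one-bit function matches a one-bit parity perfectly, and the right-hand side is trivially $\tfrac12$, so the lemma no longer feeds the switching argument --- which (see Appendix~\ref{app:switching-params}) explicitly conditions on the event that the survivor retains $\Omega(m)$ free support. Nor can you repair this by leaving the non-pivot coordinates free, since the non-survivor rows also have support on non-pivot columns and would then fail to be constant: under independent coordinate fixings, neutralizing the other $t'-1$ parities and keeping the survivor wide are mutually exclusive. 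The paper escapes exactly this tension by using a \emph{projection} rather than independent fixings: Gaussian elimination of rows $2,\dots,t'$ over an invertible pivot set $S$ with $|S|=t'-1$ substitutes $b_S=b_S(b_F)$ affinely, so the non-survivor constraints hold identically on the whole restricted domain while $|F|=m-(t'-1)=\Omega(m)$ coordinates remain genuinely free; the survivor's free support is then at least $w_\star-(t'-1)$ deterministically, and is made linear with constant probability via the random combination $H_\star=\alpha^\top H$ (the Lin-Weight assumption from the 2-core structure), the greedy pivot rule along a random column permutation $\pi$, the coloop bias term $\Delta(A)$, and the permutation-concentration lemma. None of this machinery has a counterpart in your $\mathrm{Binomial}(t',\Theta(1/t'))$ pivot-survival scheme, which does not appear in the paper at all.

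Two further claims in your write-up do not hold as stated. First, ``the free coordinate stays uniform on the induced affine line inside each coset'' is false: inside a fixed coset $\{b:Hb=u\}$ every parity is pinned to $u_i$, so once the other $m-1$ coordinates are fixed the pivot bit $b_{p_j}$ is \emph{determined} by $\mathrm{par}_j=u_j$ --- there is no within-coset freedom left. The only randomness in your free bit comes from the mixture across the two compatible cosets, and under the balanced sampling ($\Pr[u=0^{t'}]=\tfrac12$, otherwise uniform on $\mathrm{Im}(H)\setminus\{0^{t'}\}$) these cosets carry wildly unequal mass, so the restricted law is strongly biased rather than affine-uniform and your part~(c) / $\mathrm{TV}=0$ claim fails; the paper instead obtains exact measure preservation from the fiberwise affine bijection $b_F\mapsto(b_F,b_S(b_F))$, which leaves an $\Omega(m)$-dimensional uniform domain within each coset. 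Second, the obstacle you flag at the end --- restrictions on which a neutralized parity lands on an unsatisfying value, making $\chi\!\upharpoonright_\rho$ constant so that its advantage is not controlled by any single-parity advantage --- is indeed the delicate point, but your proposal leaves it open rather than resolving it; identifying the gap does not close it. As it stands, the proposal establishes neither the affine-uniformity clause of the lemma nor a non-vacuous form of the displayed inequality.
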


% proofs/lemma_proj_one_parity_v3.tex
% Projection to one surviving parity (TV=0, support guarantees)
\sloppy
\paragraph{Setting.}
Let \(t'\ge 1\). Let \(A\in\{0,1\}^{m\times n}\) be a 3XOR incidence matrix over \(\mathbb F_2\).
Let \(H:=H(A)\in\{0,1\}^{t'\times m}\) be any full-rank matrix with \(H A=0\).
For \(u\in\{0,1\}^{t'}\) with \emph{nonempty fiber} \(\{b\in\{0,1\}^m: H b=u\}\) (equivalently \(u\in\mathrm{Im}(H)\)),
write \(\mathsf D_{A,u}\) for the affine-uniform distribution on that fiber.
We call the right-hand-side bits \(b\) the \emph{RHS bits}. Inner products \(\langle\cdot,\cdot\rangle\) are over \(\mathbb F_2\).

\paragraph{Notation.}
Fix an invertible row transform \(T\in\mathrm{GL}_{t'}(\mathbb F_2)\) and set \(H':=T H\) and \(u':=T u\).
Write \(H'_{-1}\) for rows \(2,\dots,t'\) of \(H'\).
A \emph{pivot set} \(S\subseteq[m]\) is \emph{invertible} if \(\operatorname{rank}(H'_{-1,S})=t'-1\).
Let \(F:=[m]\setminus S\).
The \emph{survivor} will be the first row \(H'_1\); its weight is \(w_\star:=|\mathrm{supp}(H'_1)|\).
A \emph{coloop} (w.r.t.\ \(H'_{-1}\)) is a column whose removal decreases \(\operatorname{rank}(H'_{-1})\)
(equivalently: it belongs to every basis of the column matroid).

\paragraph{Pivot rule (technical).}
Given a column permutation \(\pi\), Gaussian elimination of rows \(2,\dots,t'\) proceeds \emph{in the order \(\pi\)} and picks the first available pivot each time. Since \(\operatorname{rank}(H'_{-1})=t'-1\), invertible pivot sets exist.
\emph{Edge case.} If \(t'=1\) then \(S=\varnothing\), \(F=[m]\); elimination is empty and the projection is the identity on \(b\).

\begin{proof}[Proof of Lemma~\ref{lem:proj-one-parity}]
We construct a randomized, \emph{measure-preserving} projection \(\mathcal R\) (independent of \(b\sim\mathsf D_{A,u}\)) producing a restriction \(\rho\) on RHS bits with the following properties:    
\begin{enumerate}
  \item \textbf{Exactly one parity survives} (deterministic if \(w_\star>t'-1\); otherwise with positive probability; under Lin-Weight + random \(\pi\): with constant probability).\\
  All parities except the survivor (row \(1\)) become constants under \(\rho\); the survivor remains \emph{non-constant} (i.e., \(H'_{1,F}\neq 0\)) on the free coordinates.\\
  \emph{Deterministic success} if \(w_\star>t'-1\): since \(|S|=t'-1\), not all \(w_\star\) support columns can lie in \(S\), hence \(H'_{1,F}\neq 0\) for \emph{every} invertible \(S\).
  Otherwise we draw a uniform random permutation \(\pi\) of columns, set \(S:=S(\pi)\) via the greedy pivot rule, and \emph{reject} until \(H'_{1,F(\pi)}\neq 0\). Acceptance has positive probability unless every survivor-support column is a coloop of \(H'_{-1}\).\\
  \emph{Mini-lemma (non-coloop \(\Rightarrow\) good pivot set exists).} If at least one survivor-support column is not a coloop of \(H'_{-1}\), then there exists an invertible pivot set \(S\) with this column placed in \(F\); consequently \(H'_{1,F}\neq 0\). (Matroid basis-exchange.)
  \item \textbf{Affine-uniformity \& TV\(=0\) (measure preservation).}\\
  Eliminating rows \(2,\dots,t'\) expresses \(b_S\) as an affine function of \(b_F\): \(b_S=b_S(b_F)\).
  The map
  \[
    b_F\longmapsto(b_F,b_S(b_F))
  \]
  is a bijection of the fiber
  \[
    \{\,b: H b=u\,\}\ \longleftrightarrow\ \{\,b_F:\ \langle H'_{1,F}, b_F\rangle = u'_1 \oplus c(S,u')\,\}.
  \]
  Hence the induced distribution on admissible \(b_F\) is uniform on the projected affine subspace and the push-forward preserves total variation exactly (\(\mathrm{TV}=0\)).
  
  \item \textbf{Free-support size (two regimes).}
  \begin{enumerate}
    \item \emph{Deterministic bound.} For any invertible \(S\),
    \[
      |\mathrm{supp}(H'_1)\cap F|\ \ge\ w_\star - (t'-1).
    \]
    In particular, if \(w_\star \ge c\,m\) with \(c>(t'-1)/m\), then the survivor depends on \(\Omega(m)\) free RHS bits.
    \item \emph{Random balanced window (linear free support with constant probability).}\\
    Assume \(m=(1+\gamma)n\) for fixed \(\gamma\in(0,1]\) and \(A\) random (each clause a uniform 3-set without repetition).
    Draw a uniform nonzero \(\alpha\in\{0,1\}^{t'}\) and set the survivor \(H_\star:=\alpha^\top H\); then pick a uniform random permutation \(\pi\) of columns and define the pivot set \(S(\pi)\) via Gaussian elimination of rows \(2,\dots,t'\) \emph{in the order \(\pi\)}, with \(F(\pi):=[m]\setminus S(\pi)\).
    
    \paragraph{Assumption (Lin-Weight).}
    There exist constants \(\gamma_0,p_0>0\) such that
    \[
      \Pr_{\alpha,A}\big[\,|\mathrm{supp}(H_\star)|\ge \gamma_0 m\,\big]\ \ge\ p_0.
    \]
    (Standard via expansion of the 2-core in random 3-uniform hypergraphs / LDPC-duals; primary references are cited in the main text.)

    \paragraph{Bias term \(\Delta(A)\) (definition \& bound).}
    \[
      \Delta(A)\ :=\ \#\{\ \text{survivor-support columns that are forced into }S(\pi)\ \text{because they are coloops of }H'_{-1}\ \},
    \]
    whence
    \[
      0\ \le\ \Delta(A)\ \le\ \#\big(\mathrm{Coloops}(H'_{-1})\cap \mathrm{supp}(H_\star)\big).
    \]

    \paragraph{Averaging (with bias term).}
    \[
      \mathbb E_{A,\alpha,\pi}\!\left[\,\big|\mathrm{supp}(H_\star)\cap F(\pi)\big|\,\right]
      \;\ge\; \Bigl(1-\frac{t'-1}{m}\Bigr)\, \mathbb E_{A,\alpha}\!\left[\,|\mathrm{supp}(H_\star)|\,\right]\ -\ \mathbb E_A[\Delta(A)].
    \]
    (In the idealized case without loops/coloops of \(H'_{-1}\), equality holds.)

    \paragraph{Concentration (permutation-based; named reference).}
    Expose \(\pi\) and apply a permutations-based concentration inequality (e.g., Chatterjee’s exchangeable pairs for random permutations, or Efron–Stein on symmetric groups). This yields a constant \(c_2>0\) (depending only on \(\gamma_0,p_0,t'\)) such that
    \[
      \Pr_{\alpha,A,\pi}\!\left[\,\big|\mathrm{supp}(H_\star)\cap F(\pi)\big|\ \ge\ \tfrac12\!\left(1-\tfrac{t'-1}{m}\right)\gamma_0 m\,\right]\ \ge\ c_2.
    \]
    Thus, with constant probability, the survivor has \(\Omega(m)\) free coordinates.
    \emph{Remark.} Under Lin-Weight and standard 2-core expansion, the number of survivor-support columns that are coloops of \(H'_{-1}\) is \(o(m)\) w.h.p., so \(\mathbb E_A[\Delta(A)]=o(m)\).
  \end{enumerate}
  \end{enumerate}

\paragraph{Source of randomness \& independence.}
The draws \(\alpha\) (survivor), \(\pi\) (equivalently \(S(\pi)\)), and the switching restriction \(\rho'\) are mutually independent; all are independent of \(b\sim\mathsf D_{A,u}\).
The projection \(\rho\) (this lemma) and \(\rho'\) (switching) are drawn independently.

\paragraph{Completion.}
Apply \(T\) so the survivor is row \(1\). Eliminating rows \(2,\dots,t'\) over an invertible \(S\) fixes \(b_S\) affinely; the survivor evaluates as \(\langle H'_{1,F}, b_F\rangle \oplus c(S,u')\) and is non-constant iff \(H'_{1,F}\neq 0\).
The deterministic success \(w_\star>t'-1\) follows since \(|S|=t'-1\).
The map \(b_F\mapsto(b_F,b_S(b_F))\) is an affine bijection on the fiber, giving TV \(=0\).
For part (b), average over \(A,\alpha\) with bias term \(\Delta(A)\), then use permutations-based concentration (as named above) to obtain a constant-probability linear lower bound.
\end{proof}

\begin{theorem}[Size-aware \ACzlog{} hardness on Balanced 3XOR]
Fix depth $d$ and a size bound $s(N)\le N^k$.
There exist absolute constants $c\in(0,1)$ and $\alpha_d=\alpha_d(d)>0$ such that for every P-uniform \ACzlog{} model $M$
of size at most $s(N)$,
\[
\Pr[M(A,b)=\chi(\Phi)] \ \le\ \min\Big\{1,\ \tfrac12 \;+\; s(N)\cdot \exp\!\big(-\alpha_d\, m^{c/d}\big)\Big\}.
\]
\end{theorem}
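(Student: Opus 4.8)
The plan is to reduce the size-aware correlation bound to the classical parity-versus-$\mathrm{AC}^0$ bound (H\aa{}stad's switching lemma), using the measure-preserving projection of Lemma~\ref{lem:proj-one-parity} to replace the AND of $t'=\Theta(n)$ parities by a single parity of \emph{linear} support. First I would pass from success probability to advantage: since the balanced distribution has $\Pr[\chi(\Phi)=1]=\tfrac12$, we have $\Pr[M=\chi]\le\tfrac12+\langle M,\chi\rangle$ with $\chi=\bigwedge_{i=1}^{t'}\mathrm{par}_i$ on the RHS bits $b$. Lemma~\ref{lem:proj-one-parity} then supplies a projection $\rho$ that is exactly measure-preserving ($\mathrm{TV}=0$) and under which, with constant probability $c_2>0$ (over $A,\alpha,\pi$), a single survivor parity $\mathrm{par}_j$ remains alive on $\ell=\Omega(m)$ free coordinates while all others become constants, giving
\[
\langle M,\chi\rangle\ \le\ \mathbb{E}_{\rho\sim\mathcal{R}}\ \langle M\!\upharpoonright_{\rho},\ \mathrm{par}_j\rangle .
\]
Because the push-forward is exact, $M\!\upharpoonright_{\rho}$ is still a P-uniform $\mathrm{AC}^0{+}\log$ circuit of depth $d$ and size $\le s(N)$, now tested against a genuine linear-support parity.

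Next I would run the standard iterated-restriction argument on $M\!\upharpoonright_{\rho}$ against $\mathrm{par}_j$. Applying $d-1$ rounds of independent $p$-random restrictions $\rho'_1,\dots,\rho'_{d-1}$ on the free bits $b_F$ (independent of $b$ and of $\rho$, per the independence note), H\aa{}stad's switching lemma collapses the bottom two layers into one at each round except with failure probability at most $s\cdot(5pw)^{t}$ per gate, for decision-tree threshold $t$ and bottom width $w=O(\log s)$; after $d-1$ rounds the circuit is a decision tree of depth below the number of surviving live variables of $\mathrm{par}_j$, which therefore cannot equal parity on them. Tuning $p$ so that a threshold $t=\Theta(m^{c/d})$ is met while $\Theta(m^{c/d})$ parity variables survive all $d-1$ rounds in expectation forces the union-bounded switching-failure probability below $s\cdot\exp(-\alpha_d m^{c/d})$. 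A shallow decision tree has no correlation with a parity on its live variables, so conditioned on the good (no-failure) event the advantage is $\exp(-\Omega(m^{c/d}))$; combining the two contributions yields $\langle M\!\upharpoonright_{\rho},\mathrm{par}_j\rangle\le s(N)\,\exp(-\alpha_d m^{c/d})$.

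I would then take expectations: averaging over $\rho$ (and over $A$ via the constant-probability linear-support guarantee of Lemma~\ref{lem:proj-one-parity}, part~3b) absorbs the factor $c_2$ into the constant $\alpha_d$, and the length-only $O(\log N)$ advice is hardwired into the fixed circuit, shifting the effective size by at most a $\mathrm{poly}(N)$ factor already subsumed by $s(N)\le N^k$. This gives $\langle M,\chi\rangle\le s(N)\exp(-\alpha_d m^{c/d})$; adding $\tfrac12$ and intersecting with the trivial bound $1$ produces the displayed inequality.

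The main obstacle is the interface between the two restriction stages. I must check that the projection $\rho$ (which fixes the pivot bits $b_S$ as an affine function of $b_F$ and preserves affine-uniformity) composes with the switching restrictions $\rho'_k$ so that (i) the conditional law on the remaining live coordinates stays exactly uniform, and (ii) enough of the survivor support $\mathrm{supp}(H'_{1,F})$ survives all $d-1$ rounds to leave $\Theta(m^{c/d})$ live parity bits. The linear-support bound of Lemma~\ref{lem:proj-one-parity}~(3b) and the mutual independence of $\alpha,\pi,\rho'$ are exactly what make this composition clean, but the quantitative balancing of $p$ against the depth budget---and carrying the coloop/bias term $\Delta(A)=o(m)$ through the restrictions---is where the slack $c\in(0,1)$ in the exponent $m^{c/d}$ is spent (yielding $c/d=\tfrac{1}{3d}$ at $c=\tfrac13$) and where the coefficient $\alpha_d>0$ is pinned down.
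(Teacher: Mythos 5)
Your proposal follows essentially the same route as the paper's own argument (proof sketch plus Appendix~A): Lemma~\ref{lem:proj-one-parity} to collapse the AND of $t'$ parities to one surviving parity with $\Omega(m)$ free support, iterated $p$-random restrictions at $p=m^{-c/d}$ with H\aa{}stad's switching lemma, the parity-vs-width correlation bound $\exp(-\alpha_d m^{c/d})$, a union bound supplying the $s(N)$ factor, and the observation that length-only advice cannot encode instance-specific masks. Your handling of the constant-probability success event (conditioning on it and folding $c_2$ into $\alpha_d$) matches the paper's own convention in Appendix~A, so the proposal is correct at the same level of rigor as the paper's proof.
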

\noindent\emph{Proof sketch (full proof in Appendix~A).}
Apply the projection lemma to reduce an AND of many parities to a single surviving parity in expectation.
Choose the switching parameter $p=m^{-c/d}$ for a fixed absolute $c\in(0,1)$; after $d$ rounds the bottom width is $\mathrm{poly}(1/p)=m^{O(c/d)}$,
and parity has exponentially small correlation $\exp(-\alpha_d m^{c/d})$ (see the cited switching-lemma references).
The $O(\log N)$ advice is length-only and cannot encode instance-specific masks. The factor $s(N)$ comes from a union bound
(or standard counting) over substructures after restriction.

\paragraph{Image window (canonical 3SAT slice).}
We define the image window $\mathcal{W}$ as the class of 3CNFs obtained by replacing each XOR clause
by the canonical block of four 3-clauses (fixed literal order, fixed block order, no auxiliaries).
All statements below are with respect to the uniform distribution on $\mathcal{W}$.

% snippets/main_bridge_snippets.tex
% Bridge snippets: size-aware statement, K-corollaries, Lin-Weight assumption,
% A2^d proof (TV-Lipschitz), and global conventions.

% ----------------------------------------------------------------
% Size-aware statement (Abstract & Theorem 1)
% ----------------------------------------------------------------

\paragraph{Size-aware statement (Abstract \& Theorem 1).}
Assume P-uniform depth-\(d\) models of size \(s(N)\le N^k\) (fixed \(k\in\mathbb N\)).
Then, for all sufficiently large \(m\) and \(N=\Theta(m\log n)\),
\[
  \Pr[M=\chi] \;\le\; \tfrac12 \;+\; s(N)\cdot \exp\!\bigl(-\alpha_d\, m^{\,c/d}\bigr),
\]
with absolute constants \(c\in(0,1)\) and \(\alpha_d=\alpha_d(d)>0\).
Concretely, we instantiate \(c=\tfrac{1}{3}\), which corresponds to the shorthand exponent \(\beta_d=\tfrac{1}{3d}\) used elsewhere.

% ----------------------------------------------------------------
% K-corollaries (two standard readings)
% ----------------------------------------------------------------

\paragraph{\(K\)-corollaries (two standard readings).}
\begin{itemize}
  \item \textbf{Size-capped class (implies \(K=\infty\)).}
  If \(\varepsilon < \tfrac12 - N^k \exp\!\bigl(-\alpha_d\, m^{\,c/d}\bigr)\) for all sufficiently large \(m\),
  then no model in the size-capped class achieves error \(\le \varepsilon\) on the balanced window; hence
  \[
    K_{\ACzlog{}_d}(\mu_{\text{bal}},\varepsilon)\;=\;\infty.
  \]

  \item \textbf{Size-parameterized class (implies \(K\ge \Omega(\log s)\)).}
  Any model with error \(\le \varepsilon < \tfrac12 - \exp\!\bigl(-\alpha_d\, m^{\,c/d}\bigr)\) must satisfy
  \[
    s(N)\;\ge\; \exp\!\bigl(\Omega(m^{\,c/d})\bigr).
  \]
  With a \emph{prefix-free} encoding of the size-capacity, every such \(s(\cdot)\) requires
  \[
    L(\mathrm{desc})\ \ge\ \Omega\bigl(\log s(N)\bigr),
  \]
  hence \(K\ge \Omega(\log s(N))\).
\end{itemize}

\emph{Encoding note.} We encode the size-capacity \(s(\cdot)\) prefix-freely; thus \(L(\mathrm{desc})\ge \Omega(\log s(N))\).

% ----------------------------------------------------------------
% Assumption (Lin-Weight; balanced window)
% ----------------------------------------------------------------

\paragraph{Assumption (Lin-Weight; balanced window).}
For random 3XOR at \(m=(1+\gamma)n\) (\(\gamma\in(0,1]\)), there exist \(\gamma_0,p_0>0\) such that, for a uniform nonzero \(\alpha\) and \(H_\star=\alpha^\top H(A)\),
\[
  \Pr_{\alpha,A}\bigl[\,|\mathrm{supp}(H_\star)| \ge \gamma_0\,m\,\bigr]\ \ge\ p_0.
\]
\emph{(Standard via expansion of the 2-core in random 3-uniform hypergraphs / LDPC-duals; see cited primary sources.)}
\emph{Consequence.} Under Lin-Weight and standard 2-core expansion, the number of survivor-support columns that are coloops of \(H'_{-1}\) is \(o(m)\) w.h.p.; corresponding bias terms in the projection analysis are sublinear.

\paragraph{A2d (cross-reference).}
See Section~2 "Reduction robustness (A2d)" for the full statement and proof.

% ----------------------------------------------------------------
% Global conventions (once)
% ----------------------------------------------------------------

\paragraph{Global conventions.}
\begin{itemize}
  \item Advantage: \(\langle M,f\rangle := \bigl|\Pr[M=f]-\tfrac12\bigr|\).
  \item Notation: we write \(K_C^L\) for the KC functional on language \(L\); subscripts/superscripts indicate the target language.
  \item Width: we measure \emph{bottom width}; decision-tree depth bounds follow via switching (up to constants).
  \item Encoding: with \(n\) vars and \(m\) 3-clauses (indices in \(\lceil\log n\rceil\) bits), \(N=\Theta(m\log n)\); 3XOR\(\to\)3SAT yields \(m'=4m\), \(N'=\Theta(N)\).
  \item Advice: \(|\theta_N|=O(\log N)\), length-dependent only (never instance-dependent).
  \item Constant dependencies: \(c\in(0,1)\); \(c_2=c_2(\gamma_0,p_0,t')\); \(\alpha_d=\alpha_d(d)\) (determined by the fixed switching-lemma reference).
\end{itemize}

\section{Deterministic translation to 3SAT and transfer}
Each XOR clause is translated into four 3-clauses without auxiliaries (canonical order).
Let $R$ be this translation on a fixed slice.

\begin{lemma}[Injectivity and $\delta{=}0$]
On the balanced slice with canonical 4-clause blocking per XOR, $R$ is injective (fiber size $1$).
Hence the push-forward of the uniform distribution on the domain to the image window is exactly uniform (\,$\delta=0$\,).
\end{lemma}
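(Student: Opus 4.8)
\emph{Proof proposal.}
The plan is to produce an explicit left inverse (a decoder) for $R$: its mere existence forces $R$ to be injective with every fiber $R^{-1}(\psi)$ a singleton, and once $R$ is seen to be a bijection onto the image window $\Window$, the measure claim reduces to the elementary fact that an injection transports a uniform law on its domain to the uniform law on its image. First I would recall the canonical block: for an XOR clause on the variable triple $\{i,j,k\}$ with right-hand side $b\in\{0,1\}$, the translation emits the four 3-clauses forbidding exactly the four assignments of the wrong parity, with literal signs pinned by $b$, variable indices pinned by $(i,j,k)$, and both the intra-clause literal order and the order of the four clauses pinned by the canonical convention. The two features doing the work are that no auxiliary variable is introduced (so the image 3CNF carries the same variable set as the source) and that $b$ is recoverable from the sign pattern of the block.

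Next I would define the decoder $D$ on $\Window$. Given $\psi\in\Window$, segment its clause list into consecutive blocks of four (well-defined since $m'=4m$ and the block order is fixed), read each block's variable triple from the three variables it mentions, and read the right-hand side $b$ from the parity encoded by the block's sign patterns; reassembling the triples and right-hand sides recovers the source pair $(A,b)$, so $D\circ R=\mathrm{id}$ on the slice. For the measure step, injectivity gives that the push-forward $R_\#U$ is supported on $\Window$ and assigns every $\psi$ the common mass $\Pr_{X\sim U}[X=R^{-1}(\psi)]$; since these masses are equal and sum to $1$ over the $|\Window|$ image points, each equals $1/|\Window|$, i.e.\ $R_\#U=U(\Window)$ and $\|R_\#U-U(\Window)\|_{\TV}=0$. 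Thus $\delta=0$, and feeding $\delta=0$ into the $\ADeux$ lemma transfers the size-aware hardness bound to 3SAT on $\Window$ with no additive TV slack.

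The only genuine obstacle I anticipate is bookkeeping rather than mathematics: one must check that the canonical conventions make $R$ a well-defined function whose image blocks parse unambiguously---in particular that flipping $b$ really changes the emitted block (it flips the encoded parity, hence every clause's sign pattern) and that two distinct source clauses cannot accidentally yield identical blocks once triples and parities are read back. The \emph{no-auxiliaries} stipulation is the load-bearing hypothesis: auxiliary variables would generically enlarge fibers through free auxiliary assignments and break the fiber-size-$1$ claim, so I would flag that assumption explicitly when writing out the decoder.
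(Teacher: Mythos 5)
Your proposal is correct and follows essentially the same route as the paper's proof: the paper also argues injectivity by blockwise decoding (reading off the three literals in canonical order from each 4-clause block, which is exactly your decoder $D$ with $D\circ R=\mathrm{id}$) and then concludes $\delta=0$ from the fact that all fibers have size $1$, so the push-forward of the uniform source distribution is uniform on the image window $\mathcal{W}$. Your added care about parsing well-definedness, recoverability of $b$ from the sign patterns, and the load-bearing role of the no-auxiliaries convention is sound elaboration of the same argument, not a different approach.
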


\begin{proof}
Fix the canonical image window $\mathcal{W}$: each XOR clause $(\ell_1 \oplus \ell_2 \oplus \ell_3)$
is replaced by exactly four 3-clauses with fixed literal order and fixed block order, without auxiliaries.
The translation $R$ acts blockwise and is injective: from each 4-clause block one reads off the three literals
in canonical order, which uniquely reconstructs the XOR clause. Thus every fiber has size $|R^{-1}(\Psi)|=1$
for all $\Psi\in\mathcal{W}$.
On the image window $\mathcal{W}$ this implies functional equality of labels,
$\chi_{\mathrm{SAT}}(R(\Phi))=\chi_{\oplus}(\Phi)$.
Let $\mu$ be the uniform distribution on the source slice; since all fibers have size $1$,
the push-forward $R_\#\mu$ is exactly the uniform distribution on $\mathcal{W}$. Hence $\delta=0$ in Total Variation.
\end{proof}

\noindent By \ADeux{} we obtain:

\begin{theorem}[Size-aware \ACzlog{} lower bound for 3SAT in the balanced window]
Fix $d$ and $s(N')\le (N')^k$. There are constants $\alpha_d,\beta_d>0$ as above such that
\[
\Pr_{\Psi\sim R_\#D_{\mathrm{bal}}}[M(\Psi)=\chi_{\mathrm{SAT}}(\Psi)]
\ \le\ \min\Big\{1,\ \tfrac12\;+\; s(N')\cdot \exp\!\big(-\alpha_d\, m^{\beta_d}\big)\Big\},
\]
where $m$ denotes the \emph{preimage} clause count and $N'$ is the 3SAT encoding length (with $m'=4m$ clauses and unchanged variables).
Equivalently, writing $\delta(m):=\exp(-\alpha_d m^{\beta_d})$, the bound is $1/2 + s(N')\,\delta(m)$ (capped at $1$).
\end{theorem}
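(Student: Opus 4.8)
The plan is to transfer the balanced-3XOR hardness theorem across the canonical translation $R$ by composing any candidate 3SAT model with $R$ and invoking \ADeux{} in its $\delta{=}0$ form. Concretely, fix a P-uniform \ACzlog{} model $M$ of depth $d$ and size $s(N')\le (N')^{k}$ on the image window $\mathcal{W}$, and define the 3XOR model $M':=M\circ R$, i.e.\ $M'(A,b):=M(R(A,b))$. This is exactly the construction used inside the proof of \ADeux{} (with source language $L_1=3\mathrm{XOR}$ and target $L_2=3\mathrm{SAT}$), so the accuracy bookkeeping established there applies verbatim; the only thing I will need to supply is the observation that here the transfer is lossless.

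First I would record the two consequences of the injectivity lemma. Since $R$ is injective with every fiber of size $1$ and $\mathcal{W}$ is precisely the support of $R_\#D_{\mathrm{bal}}$, the map $R$ is a measure-preserving bijection onto $\mathcal{W}$; combined with the label identity $\chi_{\mathrm{SAT}}(R(\Phi))=\chi_\oplus(\Phi)$ this yields the \emph{exact} accuracy identity
\[
\Pr_{\Psi\sim R_\#D_{\mathrm{bal}}}\![\,M(\Psi)=\chi_{\mathrm{SAT}}(\Psi)\,]
\;=\;
\Pr_{\Phi\sim D_{\mathrm{bal}}}\![\,M'(A,b)=\chi_\oplus(\Phi)\,].
\]
Because $\delta=0$, there is no Total-Variation slack to absorb: the two accuracies are equal, not merely within $\delta(n)$. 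This is the step where the $\delta{=}0$ property (rather than a generic $\delta$-stable reduction) is essential, since it lets the 3XOR bound transfer with no degradation in the constants $\alpha_d,\beta_d$.

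Next I would verify that $M'$ still lives in the size-capped P-uniform \ACzlog{} class to which the balanced-3XOR theorem applies. The translation $R$ is a fixed, blockwise, P-uniform transducer: it reads off the three literals of each XOR clause in canonical order and emits the canonical four-clause block, introducing no auxiliary variables and using no instance-dependent information. Hence $R$ is computed by a constant-depth (essentially fixed-wiring plus trivial gates) circuit, so precomposition raises depth by only $O(1)$ and size by only a polynomial factor, and it adds nothing to the length-only $O(\log N')$ advice. Since $N'=\Theta(N)$, $m'=4m$, and the variable set is unchanged, the composed model $M'$ has size polynomial in the 3XOR encoding length $N$ and advice $O(\log N)$, so it is admissible for the 3XOR theorem (this is the ``$M_1\in C$ by closure'' clause of \ADeux{}).

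Applying the balanced-3XOR hardness theorem to $M'$ then bounds the right-hand side of the accuracy identity by $\tfrac12+s\cdot\exp(-\alpha_d\,m^{c/d})$, with $m$ the preimage clause count. Writing $\beta_d:=c/d$ (e.g.\ $c=\tfrac13$, $\beta_d=\tfrac1{3d}$) and folding the constant depth overhead and polynomial size overhead of $R$ into $\alpha_d$ and the $s(N')$ prefactor (legitimate because $N'=\Theta(N)$) gives the claimed inequality, capped at $1$. The main obstacle I anticipate is precisely the resource-accounting of the third paragraph: one must confirm that composing with $R$ neither pushes the depth past what the Switching-Lemma argument behind the 3XOR theorem tolerates (depth $d$ up to an additive constant, absorbed into $\alpha_d$) nor smuggles instance-dependent data into the length-only advice. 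Both follow from $R$ being a fixed local map, but they are the steps that make the otherwise one-line transfer rigorous.
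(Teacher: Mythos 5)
Your proposal is correct and follows essentially the same route as the paper: the paper derives the theorem by invoking the \ADeux{} lemma together with the injectivity/$\delta{=}0$ lemma, and your argument simply unfolds the \ADeux{} proof in this instance (composing $M' := M\circ R$, using the exact accuracy identity from fiber-size $1$ and $\delta=0$, and checking closure of the size-capped P-uniform \ACzlog{} class under the fixed blockwise transducer $R$). Your explicit attention to the depth/advice accounting when precomposing with $R$ is a fine point the paper compresses into ``$M_1\in C$ by closure,'' but it is the same argument.
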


\begin{corollary}[Reading as a \KC{} statement]
Let $\delta(m):=\exp(-\alpha_d m^{\beta_d})$ and assume $s(N')\le (N')^k$.
For any $\varepsilon<\tfrac12 - s(N')\,\delta(m)$ there is \emph{no} $M\in\ACzlog{}$ of size $\le s(N')$ with error $\le\varepsilon$ on $R_\#D_{\mathrm{bal}}$.
Equivalently: within the \emph{size-capped} class, $K^{3\mathrm{SAT}}_{\ACzlog{}}(N',\varepsilon;R_\#D_{\mathrm{bal}})=\infty$.
If, alternatively, one parametrizes by size $s$ (writing $\ACzlog{}_d[s]$), then any $M$ achieving $\varepsilon<\tfrac12-\delta(m)$
must have $s(N')\ge \exp(\Omega(m^{\beta_d}))$, yielding $K\ge \Omega(\log s(N'))$.
\end{corollary}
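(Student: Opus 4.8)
The plan is to read the preceding Theorem (the size-aware \ACzlog{} lower bound for 3SAT on $R_\#D_{\mathrm{bal}}$) directly through the definition of the cost functional $K_C(\mu,\varepsilon)$, since both corollary readings are logical repackagings of a single accuracy bound. The only quantitative input is
\[
  \Pr_{\Psi\sim R_\#D_{\mathrm{bal}}}[M(\Psi)=\chi_{\mathrm{SAT}}(\Psi)]\ \le\ \tfrac12+s(N')\,\delta(m),
\]
valid for every P-uniform \ACzlog{} model $M$ of size $\le s(N')$, together with the identity $\Pr[M\neq\chi_{\mathrm{SAT}}]=1-\Pr[M=\chi_{\mathrm{SAT}}]$ that converts an accuracy bound into an error bound. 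Everything below is bookkeeping against this inequality and the stated conventions.

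For the size-capped reading ($K=\infty$), I would assume for contradiction that some $M$ in the size-capped class achieves error $\le\varepsilon$, i.e.\ accuracy $\ge 1-\varepsilon$. Combining with the Theorem gives $1-\varepsilon\le\tfrac12+s(N')\,\delta(m)$, that is $\varepsilon\ge\tfrac12-s(N')\,\delta(m)$, contradicting the hypothesis $\varepsilon<\tfrac12-s(N')\,\delta(m)$. Hence the feasible set in the $\min$ defining $K_C$ is empty, and by the convention $\min\varnothing=+\infty$ we conclude $K^{3\mathrm{SAT}}_{\ACzlog{}}(N',\varepsilon;R_\#D_{\mathrm{bal}})=\infty$.

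For the size-parameterized reading ($K\ge\Omega(\log s)$), I would instead rearrange the Theorem to isolate the size. Any feasible $M$ with error $\le\varepsilon$ has accuracy $\ge 1-\varepsilon$, so $\tfrac12-\varepsilon\le s(N')\,\delta(m)$ gives
\[
  s(N')\ \ge\ \frac{\tfrac12-\varepsilon}{\delta(m)}\ =\ \bigl(\tfrac12-\varepsilon\bigr)\exp\!\bigl(\alpha_d\,m^{\beta_d}\bigr).
\]
When $\varepsilon$ is bounded below $\tfrac12$ by a positive constant (equivalently, when accuracy exceeds $\tfrac12$ by a constant), the numerator is $\Omega(1)$, so $s(N')\ge\exp(\Omega(m^{\beta_d}))$ and hence $\log s(N')\ge\Omega(m^{\beta_d})$. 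The final step invokes the Encoding note: since the size-capacity $s(\cdot)$ is encoded prefix-freely, every feasible description satisfies $L(\mathrm{desc})\ge\Omega(\log s(N'))$, and taking the minimum over feasible $M$ yields $K\ge\Omega(\log s(N'))$.

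The main obstacle is gap management in the second reading. Taken literally, the threshold $\varepsilon<\tfrac12-\delta(m)$ only forces $s(N')>1$, because the Theorem carries the factor $s(N')$ on $\delta(m)$ while the hypothesis subtracts a bare $\delta(m)$; to extract an exponential lower bound on $s(N')$ one genuinely needs the advantage $\tfrac12-\varepsilon$ to dominate $\delta(m)$ — in the clean statement, to be a positive constant. I would therefore phrase the size-parameterized corollary for $\varepsilon\le\tfrac12-\eta$ with a fixed $\eta>0$, the standard ``better than chance by a constant'' regime, under which $K\ge\Omega(\log s(N'))=\Omega(m^{\beta_d})$ follows. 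Beyond this threshold calibration, no new probabilistic or combinatorial content is required past the Theorem itself.
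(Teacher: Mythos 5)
Your proof is correct and is essentially the paper's own argument: the paper offers no separate derivation for this corollary beyond the rearrangement you perform, which mirrors the ``$K$-corollaries (two standard readings)'' paragraph of Section~3 --- assuming a feasible $M$ contradicts the theorem's accuracy bound, so the feasible set in the definition of $K_C$ is empty and $K=\infty$ by the $\min\varnothing=+\infty$ convention, while isolating $s(N')$ and invoking the prefix-free encoding note gives $K\ge\Omega(\log s(N'))$. Your ``gap management'' observation is moreover a genuine and correct refinement rather than a quibble: as literally stated, the hypothesis $\varepsilon<\tfrac12-\delta(m)$ in the size-parameterized reading yields only $s(N')\ \ge\ (\tfrac12-\varepsilon)/\delta(m)\ >\ 1$, since nothing prevents $\tfrac12-\varepsilon$ from approaching $\delta(m)$ from above when $\varepsilon$ is allowed to depend on $m$; the claimed exponential bound $s(N')\ge\exp(\Omega(m^{\beta_d}))$ genuinely requires the advantage $\tfrac12-\varepsilon$ to dominate $\delta(m)$ --- e.g.\ $\varepsilon\le\tfrac12-\eta$ for a fixed $\eta>0$, or more generally $\tfrac12-\varepsilon\ge\exp(-\alpha' m^{\beta_d})$ with $\alpha'<\alpha_d$ --- which is what the paper implicitly assumes by treating $\varepsilon$ as a constant (the Section~3 version of the corollary carries the same imprecision). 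Your restatement with a fixed $\eta$ is the right repair and leaves every other step of the argument unchanged.
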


\paragraph{Coding convention for size capacity.}
We encode the size-capacity $s(\cdot)$ prefix-freely (e.g., the exponent $k$ when $s(N)\le N^k$);
therefore $L(\mathrm{desc})\ge \Omega(\log s(N))$ independent of constant-factor choices.

\section{Artifacts and verification}
We provide generator scripts, canonical 3SAT DIMACS exports, verification reports (match-rate $1.0$), and CSV summaries.

\paragraph{Artifacts (repository \& DOI).}
DOI (Zenodo): \href{https://doi.org/10.5281/zenodo.17141362}{\texttt{10.5281/zenodo.17141362}}.\\
Bundle contents:
\begin{itemize}\itemsep2pt
  \item \texttt{experiments/out\_3sat\_bal/cnf/}: canonical 3SAT DIMACS exports (4 clauses per XOR; Dateien \texttt{bal3xor\_n\{n\}\_rep\{r\}.cnf}).
  \item \texttt{experiments/out\_3sat\_bal/}: \texttt{cnf\_sha256.csv}, \texttt{gen\_times.csv}, \texttt{verify\_times.csv}, \texttt{verify\_report.csv}, \texttt{summary.csv}, \texttt{report\_3sat\_bal.md}.
  \item \texttt{experiments/}: Generator- und Prüfskripte (\texttt{gen\_3xor\_balanced.py}, \texttt{gen\_3sat\_from\_balanced.py}, \texttt{verify\_3sat\_export\_balanced.py}, \texttt{report\_3sat\_balanced.py}).
  \item \texttt{results/} \& \texttt{results\_bal/}: Rang-/Diagnostik-Sweeps (CSV/MD; siehe \texttt{report\_3xor\_rank.md}).
  \item \texttt{docs/}: \texttt{main.pdf}, \texttt{main.tex}, \texttt{CITATION.bib}, \texttt{README\_balanced\_3xor\_3sat.md}.
\end{itemize}

The pipeline checks XOR-label consistency and SAT/UNSAT status after translation; seeds are fixed, and per-$n$ label balance is exact.

\paragraph{Runtime summary (per $n$).}
We report wall-clock generation and verification times aggregated per $n$ from \texttt{gen\_times.csv} and \texttt{verify\_times.csv} in the artifact bundle.
\par\addvspace{0.75\baselineskip}

\begin{table}[htbp]
\centering
\caption{Per-$n$ wall-clock times (seconds). The generator time covers balanced 3XOR sampling and 3SAT export; the verifier time covers CNF parsing and GF(2)-consistency checks.}
\par\addvspace{.5\baselineskip}
\label{tab:runtimes}
\begin{tabular}{r|r|r|r}
\hline
$n$ & reps & gen\_time\_s & verify\_time\_s \\
\hline
250 & 200 & 2.538 & 3.060 \\
300 & 200 & 3.564 & 3.867 \\
400 & 200 & 6.135 & 5.766 \\
500 & 200 & 9.627 & 8.299 \\
\hline
\end{tabular}
\end{table}

\paragraph{Checksums (sample).}
For reproducibility we provide SHA256 checksums of all generated CNFs in the artifact bundle (\texttt{cnf\_sha256.csv}). A few sample entries are listed below; the full list is part of the ancillary files.
\par\addvspace{0.75\baselineskip}
\begin{table}[htbp]
\label{tab:checksums-sample}
\centering
\begin{tabular}{l|l}
\hline
File & SHA256 (prefix) \\
\hline
bal3xor\_n250\_rep000.cnf & 36904216593B \\
bal3xor\_n250\_rep001.cnf & 98C4F0A57322 \\
bal3xor\_n250\_rep002.cnf & CBFCFD304520 \\
bal3xor\_n250\_rep003.cnf & 24B6A71CFCC9 \\
bal3xor\_n250\_rep004.cnf & 60554C685111 \\
\hline
\end{tabular}
\end{table}

\par\addvspace{0.75\baselineskip}
\begin{table}[htbp]
\centering
\caption{Per-$n$ diagnostics on the balanced 3XOR window (300 repetitions; $t=1$, $m=n+1$). Columns: mean/median/$90$th-percentile of $t'$ and SAT fraction of the XOR view.}
\par\addvspace{.5\baselineskip}
\label{tab:diag-300reps}
\begin{tabular}{r|r|r|r|r|r}
\hline
$n$ & $m$ & reps & mean $t'$ & median $t'$ & $q90(t')$ \; / \; SAT frac \\
\hline
250 & 251 & 300 & 15.783 & 16 & 20 \; / \; 0.480 \\
300 & 301 & 300 & 18.943 & 19 & 23 \; / \; 0.507 \\
400 & 401 & 300 & 24.830 & 25 & 30 \; / \; 0.483 \\
500 & 501 & 300 & 30.993 & 31 & 38 \; / \; 0.490 \\
\hline
\end{tabular}
\end{table}

\noindent\emph{Note.} This diagnostic uses the edge case $m=n+1$ (i.e., $\gamma\approx 1/n$), which is outside the fixed-$\gamma$ balanced window; the co-rank $t'$ therefore grows slowly here. All hardness statements are proved for $m=(1+\gamma)n$ with fixed $\gamma>0$.

\begin{table}[htbp]
\centering
\caption{Rank sweep on balanced 3XOR incidence matrices ($m=n+t$); $N$ reported as $m\cdot\lceil\log_2 n\rceil$. Each row aggregates $200$ repetitions for the given $(n,t)$.}
\par\addvspace{.5\baselineskip}
\label{tab:instances}
\begin{tabular}{r|r|r|r|r|r|r}
\hline
$n$ & $t$ & $m=n{+}t$ & $N\approx m\lceil\log_2 n\rceil$ & reps & successes & frac\_full\_rank \\
\hline
60  & 1 & 61  & 366  & 200 &  3 & 0.015 \\
60  & 2 & 62  & 372  & 200 &  4 & 0.020 \\
60  & 3 & 63  & 378  & 200 & 13 & 0.065 \\
80  & 1 & 81  & 567  & 200 &  1 & 0.005 \\
80  & 2 & 82  & 574  & 200 &  0 & 0.000 \\
80  & 3 & 83  & 581  & 200 &  1 & 0.005 \\
100 & 1 & 101 & 707  & 200 &  0 & 0.000 \\
100 & 2 & 102 & 714  & 200 &  0 & 0.000 \\
100 & 3 & 103 & 721  & 200 &  1 & 0.005 \\
120 & 1 & 121 & 847  & 200 &  0 & 0.000 \\
120 & 2 & 122 & 854  & 200 &  0 & 0.000 \\
120 & 3 & 123 & 861  & 200 &  0 & 0.000 \\
150 & 1 & 151 & 1208 & 200 &  0 & 0.000 \\
150 & 2 & 152 & 1216 & 200 &  0 & 0.000 \\
150 & 3 & 153 & 1224 & 200 &  0 & 0.000 \\
200 & 1 & 201 & 1608 & 200 &  0 & 0.000 \\
200 & 2 & 202 & 1616 & 200 &  0 & 0.000 \\
200 & 3 & 203 & 1624 & 200 &  0 & 0.000 \\
\hline
\end{tabular}
\end{table}

\paragraph{Scope and limitations.}
Our lower bound targets \ACzlog{} (depth-$d$ P-uniform circuits with $O(\log N)$ advice) on the balanced window.
The co-rank statement relies on standard sparse-hypergraph 2-core phenomena; we use an exponent $c/d$ with a fixed absolute $c\in(0,1)$ as specified in Appendix~\ref{app:switching-params}.
Appendix A collects the parameterized switching setup (including the correlation exponent; we use the Håstad ’87 SICOMP formulation as presented in \cite{Hastad1987_SICOMP,ODonnell_SwitchingLemma_lecture14,KatzImmerman_SwitchingLemmaNotes}), and Appendix B summarizes the 2-core facts we rely on.
Note on constants. Our choice of $\alpha_d$ is deliberately conservative for clarity; tightening $\alpha_d$ improves the numerical range in the size-aware displays but does not affect the asymptotic conclusions.

\section*{Related work}
Classical lower bounds for small-depth circuits rely on switching lemmas and parity correlation bounds \cite{Hastad1986_Thesis,Hastad1987_SICOMP,LinialMansourNisan1993_LMN,ODonnell_SwitchingLemma_lecture14,KatzImmerman_SwitchingLemmaNotes}.
Recent work on rank/corank phenomena in sparse binary structures informs our 2-core and co-rank scaling assumptions, including \cite{CooperFrieze2022_rout_rank} for $r$-out bipartite graphs and our own technical report \cite{iecz-balanced-3xor-3sat-2025}.
Our contribution combines a TV-stable reduction lemma (\ADeux{}) with a size-aware MDL-style cost functional \(K_C\), and transfers a balanced 3XOR lower bound to 3SAT via a \(\delta=0\) injective translation on a canonical image window.
Our $\delta{=}0$ measure-preserving reduction connects to recent uses of Total Variation stability in other domains, such as robustness under bounded distributions in mechanism design \cite{Liu2023_robustness_mechanism_design} and TV-focused probabilistic inference \cite{Bhattacharyya2024_total_variation_inference}, but uniquely applies these ideas to transfer average-case circuit lower bounds from 3XOR to 3SAT. Recent results around 3XOR-related phenomena in the quantum chromatic gap \cite{Cohen2025_quantum_chromatic_gap} and undefinability of approximation for 2-to-2 games \cite{Braverman2025_undefinability_2to2} complement our size-aware \ACzlog{} bounds, while gap-preserving reductions \cite{Mahajan2025_gap_preserving_reductions} underscore the value of our TV-stable, measure-preserving transfer in average-case settings.

\section*{Conclusion}
We presented a size-aware lower bound against P-uniform \ACzlog{} on a balanced window, together with a TV-stable reduction lemma and a measure-preserving 3XOR\(\to\)3SAT translation. 
Appendix~A consolidates the switching parameters and the parity correlation exponent, and Appendix~B summarizes the required 2-core facts. 
Future work includes tightening constants in the correlation exponent and extending the window analysis beyond balanced slices.

\paragraph{Acknowledgements.}
This work is part of the IECZ Project (internal codename for the author’s research line).

\appendix
% appendices/switching_params_v1.tex

\section{Switching parameters, independence, and the correlation exponent}
\label{app:switching-params}
\paragraph{Independence of \(\rho\) and \(\rho'\).}
The projection \(\rho\) from the projection lemma and the \(p\)-random restriction \(\rho'\) used for the switching step are drawn \emph{independently}.
All expectations and tail bounds are conditioned on the (constant-probability) event that exactly one parity survives under \(\rho\) with \(\Omega(m)\) free support.
The randomness of \(\rho'\) is fresh and independent of both \(\rho\) and the draws of \(A\) and \(u\).
The internal choices \(\alpha\) (survivor) and \(\pi\) (or \(S(\pi)\)) are independent of \(\rho'\).

\medskip

\begin{lemma}[Permutation concentration for free support]
Assume the Lin-Weight condition with parameters $\gamma_0,p_0>0$. Let $\pi$ be a uniformly random permutation of columns, and let $S(\pi)$ be the greedy pivot set for $H'_{-1}$ exposed in the order $\pi$, with $F(\pi):=[m]\setminus S(\pi)$. Then there is a constant $c_2=c_2(\gamma_0,p_0,t')>0$ such that
\[
  \Pr_{\alpha,A,\pi}\!\left[\,\bigl|\mathrm{supp}(H_\star)\cap F(\pi)\bigr| \;\ge\; \tfrac12\!\Bigl(1-\tfrac{t'-1}{m}\Bigr)\gamma_0 m\,\right]\ \ge\ c_2.
\]
\emph{Proof idea.} Expose $\pi$ via a Doob martingale on $S_m$ and apply a permutations-based concentration inequality (e.g., exchangeable pairs on $S_m$ in the sense of Chatterjee, or Efron--Stein on symmetric groups) with bounded increments for the greedy pivot process. The precise value of $c_2$ is immaterial for our purposes and depends only on $(\gamma_0,p_0,t')$.
\end{lemma}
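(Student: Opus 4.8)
The plan is to fix the structural data \((\alpha,A)\), run the concentration purely over the random column order \(\pi\), and then integrate over \((\alpha,A)\). Abbreviate \(\kappa:=(t'-1)/m\) and, for fixed \((\alpha,A)\), set
\[
X(\pi)\ :=\ \bigl|\mathrm{supp}(H_\star)\cap F(\pi)\bigr|\ =\ w_\star-\bigl|\mathrm{supp}(H_\star)\cap S(\pi)\bigr|,\qquad w_\star:=\bigl|\mathrm{supp}(H_\star)\bigr|.
\]
First I would condition on the event \(E'\) that Lin-Weight holds (\(w_\star\ge\gamma_0 m\)) and that the number of survivor-support coloops of \(H'_{-1}\), together with the total number of loops, is \(o(m)\); by Lin-Weight and the 2-core expansion recalled in the main text, \(\Pr_{\alpha,A}[E']\ge p_0-o(1)\). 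On \(E'\) the Averaging estimate from the proof of Lemma~\ref{lem:proj-one-parity} supplies the \(\pi\)-mean, \(\mathbb{E}_\pi[X]\ge(1-\kappa)\,w_\star-\Delta(A)\ge(1-o(1))(1-\kappa)\gamma_0 m\), which overshoots the target \(\tfrac12(1-\kappa)\gamma_0 m\) by a margin \(\Theta(m)\) (using that \(\kappa\le(m-1)/m\) stays bounded away from \(1\); in the balanced window \(\kappa\to\gamma/(1+\gamma)\le\tfrac12\)). It then suffices to show that, over \(\pi\), \(X\) falls below its mean by \(\Theta(m)\) only on a vanishing fraction of permutations.

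The crux is a bounded-difference estimate: for every transposition \((i\,j)\) of column positions,
\[
\bigl|S(\pi)\,\triangle\,S(\pi\circ(i\,j))\bigr|\ \le\ 4,\qquad\text{so}\qquad\bigl|X(\pi)-X(\pi\circ(i\,j))\bigr|\ \le\ 4 .
\]
To prove this I would compare the two Gaussian-elimination runs (over rows \(2,\dots,t'\)) position by position. Outside the interval \([i,j]\) the prefix column-sets coincide, so every pivot decision agrees there; inside the interval the two runs' prefix ranks agree at both ends, differ pointwise by at most \(1\), and their difference is \(\mathbf 1[a\notin\mathrm{span}(\cdot)]-\mathbf 1[b\notin\mathrm{span}(\cdot)]\) for the two swapped columns \(a,b\) — a single up-or-down excursion — so the per-position ``is a pivot'' patterns disagree at most twice, and the two swapped columns account for at most two further changes. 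The content of the bound is that while a single far transposition can reverse the pivot status of many columns, these reversals cancel in the count \(|\mathrm{supp}(H_\star)\cap F|\) up to an absolute constant.

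Given the Lipschitz bound I would finish by concentration over \(\pi\) with \((\alpha,A)\in E'\) fixed. The cheapest sufficient route is Efron--Stein on the symmetric group, giving \(\mathrm{Var}_\pi(X)=O(m)\); since \(\mathbb{E}_\pi[X]\) exceeds the target by \(\Theta(m)\), Chebyshev yields \(\Pr_\pi[\,X<\tfrac12(1-\kappa)\gamma_0 m\,]=O(1/m)\). (For an exponential tail, expose \(\pi\) by a Doob martingale and apply McDiarmid's bounded-difference inequality for permutations, or Chatterjee's exchangeable-pairs bound, to get \(\exp(-\Omega(m))\).) Integrating over \(E'\),
\[
\Pr_{\alpha,A,\pi}\!\Bigl[\,\bigl|\mathrm{supp}(H_\star)\cap F(\pi)\bigr|\ \ge\ \tfrac12\bigl(1-\tfrac{t'-1}{m}\bigr)\gamma_0 m\,\Bigr]\ \ge\ \Pr[E']\,(1-o(1))\ \ge\ \tfrac12 p_0
\]
for all large \(m\), which furnishes the constant \(c_2=c_2(\gamma_0,p_0,t')>0\).

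I expect the bounded-difference estimate to be the main obstacle. The naive reduction of a general transposition to \(|i-j|\) adjacent swaps fails, since each adjacent swap can already displace a pivot and the swap count grows with the interval length; and, as noted, a single transposition genuinely flips many individual pivot decisions, so the whole argument rests on the exact cancellation captured by the rank-excursion count (equivalently, a basis-exchange bound for the column matroid of \(H'_{-1}\)). The one imported ingredient I would scrutinize is the \(\Theta(m)\) gap between the \(\pi\)-mean and the target: it needs the per-instance fair-share \(\mathbb{E}_\pi[\,|\mathrm{supp}(H_\star)\cap S(\pi)|\,]\le\kappa\,w_\star+o(m)\), which degrades unless the \(o(m)\) loop/coloop control from the 2-core structure (Lemma~\ref{lem:proj-one-parity}, Appendix~B) keeps the column matroid of \(H'_{-1}\) close to uniform.
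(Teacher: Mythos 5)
Your proposal is correct and follows the route the paper only sketches---condition on Lin-Weight plus coloop control, take the $\pi$-mean from the averaging display in the proof of Lemma~\ref{lem:proj-one-parity}, and concentrate over $\pi$ by Efron--Stein/Chebyshev or a permutation martingale---but it supplies the one piece of genuine mathematical content that the paper's ``proof idea'' merely asserts, namely the bounded-increment property of the greedy pivot process. Your rank-excursion argument for $|S(\pi)\,\triangle\,S(\pi\circ(i\,j))|\le 4$ is right and is the heart of the matter: for $i\le k<j$ the two prefix ranks satisfy $r'_k-r_k=\mathbf{1}[b\notin\mathrm{span}(Q_k)]-\mathbf{1}[a\notin\mathrm{span}(Q_k)]$ with $Q_k$ the shared prefix columns, a difference of two non-increasing $\{0,1\}$-valued step functions, so it changes value at most twice; interior pivot flips occur only at those changes, and the two swapped columns contribute at most two more. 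You are also right that the naive reduction to adjacent swaps fails, so this cancellation is genuinely needed, and with Lipschitz constant $4$ your variance bound, the $\Theta(m)$ margin (using $\kappa\to\gamma/(1+\gamma)\le\tfrac12$ in the balanced window), and the integration over $E'$ give $c_2\ge p_0/2$ for large $m$ as claimed. The step you inherit---and correctly flag---is the paper's own weakest link: the per-instance averaging bound $\mathbb{E}_\pi\bigl[\,|\mathrm{supp}(H_\star)\cap S(\pi)|\,\bigr]\le \tfrac{t'-1}{m}\,w_\star+\Delta(A)$ with bias charged only to coloops. This is not a general matroid fact: even with no loops or coloops, greedy-basis inclusion probabilities under a uniform order need not be uniform (unequal parallel classes in the column matroid of $H'_{-1}$ already skew them; only the identity $\sum_j\Pr[j\in S(\pi)]=t'-1$ is automatic), so this display genuinely requires the near-uniformity supplied by the 2-core facts of Appendix~\ref{app:two-core}, not just the absence of coloops. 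Since the paper's sketch silently rests on the same display, your proposal is no weaker than the paper there and strictly more complete on the concentration side; one cosmetic point is that the lemma's constant is written as $c_2(\gamma_0,p_0,t')$ while in the balanced window $t'=\Theta(n)$ grows, and your version, which makes $c_2$ depend only on $(\gamma_0,p_0)$ and the limiting ratio $\kappa$, is the more defensible reading.
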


\paragraph{Parameter path and exponent (size-aware; referenced).}
Fix depth \(d\) and choose
\[
  p \;=\; m^{-c/d}\qquad \text{for a fixed absolute } c\in(0,1).
\]
After \(d\) switching rounds we obtain:
\begin{itemize}\itemsep2pt
  \item the number of \emph{live variables} is at most
  \[
    m\cdot p^d \;=\; m^{\,1-c},
  \]
  \item and the resulting DNF/CNF has \emph{bottom width}
  \[
    W \;=\; \mathrm{poly}(1/p) \;=\; m^{\,O(c/d)}
  \]
  (for the adopted switching-lemma variant; we follow the cited parametrization verbatim, e.g. \textbf{H\aa stad}~\cite{Hastad1987_SICOMP}, \textbf{O'Donnell--Wimmer} notes~\cite{ODonnell_SwitchingLemma_lecture14}, \textbf{Katz--Immerman}~\cite{KatzImmerman_SwitchingLemmaNotes}).
\end{itemize}

For the surviving parity, which (by the projection lemma) depends on \(\Omega(m)\) free bits with constant probability, standard parity-vs-width correlation bounds imply
\[
  \bigl|\mathrm{corr}(\text{DNF/CNF of width }\le W,\ \mathrm{Parity})\bigr|
  \;\le\; \exp\!\bigl(-\,\alpha_d\, m^{\,c/d}\bigr),
\]
where \(\alpha_d=\alpha_d(d)>0\) depends only on the depth \(d\) and the chosen switching-lemma reference.
A union bound over at most \(s(N)\le N^k\) relevant substructures yields, \emph{for all sufficiently large \(m\) and \(N=\Theta(m\log n)\)},
\[
  \Pr[M=\chi]
  \;\le\; \tfrac12 \;+\; s(N)\cdot \exp\!\bigl(-\,\alpha_d\, m^{\,c/d}\bigr).
\]
Equivalently, set \(\beta_d := c/d\) (instantiated as \(\beta_d = 1/(3d)\)).
\paragraph{Quantitative constants (calibration \& convention).}
We fix the exponent parameter at \(c=\tfrac{1}{3}\), hence \(\beta_d=\tfrac{1}{3d}\) throughout.
For the correlation constant we carry \(\alpha_d=\alpha_d(d)\) as a named depth-dependent parameter admitted by the chosen switching-lemma instantiation.
To make all displays checkable, we record a conservative convention in the ancillary artifacts:
\[
  \alpha_d^{\mathrm{conf}} \ :=\ \frac{\alpha_0}{(d+1)^4}\qquad\text{with a universal }\alpha_0\in(0,1].
\]
All stated inequalities remain valid for any \(\alpha_d' \le \alpha_d^{\mathrm{conf}}\).
\noindent\textit{Reviewer note.} The $\alpha_d$ values used here are conservative by design; any improvement only tightens the exponential rate and leaves all size-aware statements intact.
(If a different switching-lemma reference is adopted, \(\alpha_d^{\mathrm{conf}}\) can be tightened accordingly without affecting the statements.)

\iffalse
\par\addvspace{0.75\baselineskip}
\begin{table}[htbp]
\centering
\caption{Empirical calibration (optional): conservative lower bounds for \(\alpha_d\) on synthetic windows. Replace \texttt{--} with measured values from the artifact pipeline.}
\label{tab:alpha-calib}
\begin{tabular}{c|ccc}
\hline
depth \(d\) & 2 & 3 & 4 \\
\hline
\(\alpha_d^{\mathrm{emp}}\) & 2.003 & 2.732 & 3.363 \\ % min over m
runtime (gen+verify) & 0.923 & 0.888 & 0.885 \\ % summed calib runtime_s over m
seeds used & 5 & 5 & 5 \\
\hline
\end{tabular}
\end{table}
\fi

\medskip
\noindent\textit{Notes.}
(i) We bound \emph{bottom width} and infer decision-tree depth via switching (up to constants).
(ii) The constants \(\alpha_d\) and the exponent \(c/d\) are taken from the fixed switching-lemma reference named in the main text.
(iii) The \emph{balanced window} refers to the random 3XOR slice with \(m=(1+\gamma)n\) and affine-uniform RHS-conditioning.
(iv) Empirical per-depth calibration is pending; until then we report the conservative convention $\alpha_d^{\mathrm{conf}}=\alpha_0/(d+1)^4$ tied to the fixed switching-lemma reference.

\section{2-core facts for random 3XOR (balanced window)}
\label{app:two-core}
Fix $\gamma\in(0,1]$ and $m=(1+\gamma)n$. For a random 3XOR incidence matrix $A\in\{0,1\}^{m\times n}$,
the 2-core of the associated 3-uniform hypergraph has linear size w.h.p., and the left-kernel $H(A)$
has co-rank $t'=\Theta(n)$ (standard XORSAT window). In particular, for a uniform nonzero
$\alpha\in\{0,1\}^{t'}$, the survivor $H_\star=\alpha^\top H(A)$ has linear support with constant probability
(\emph{Lin-Weight} assumption), and the number of survivor-support columns that are coloops of $H'_{-1}$ is $o(m)$ w.h.p.
These facts justify that the bias term in our projection analysis is sublinear.
(Primary references: Cooper–Frieze (r-out rank), Cooper–Frieze–Pegden (binary matrices),
Pittel–Sorkin (k-XORSAT threshold).)

% Ensure cited items appear even if not referenced inline
\nocite{CooperFrieze2022_rout_rank,CooperFriezePegden2019_rank_binary_matrix,PittelSorkin2016_kxorsat_threshold,Hastad1986_Thesis,Hastad1987_SICOMP,ODonnell_SwitchingLemma_lecture14,KatzImmerman_SwitchingLemmaNotes,LinialMansourNisan1993_LMN,iecz-balanced-3xor-3sat-2025,iecz_zenodo_v1_0_1}

\bibliographystyle{abbrv}
\bibliography{CITATION}
\end{document}